\newcommand{\Z}{\mathbb{Z}} 			
\newcommand{\N}{\mathbb{N}} 			
\newcommand{\LL}{{\mathcal L}} 			
\def \card {\mathop{\rm Card }\nolimits}
\newcommand{\fleche}{\rightarrow}		
\newcommand{\function}[3]{ {#1} : {#2} \fleche  {#3} } 	
\newcommand{\set}[1]{\{#1\}}
\newcommand{\und}[1]{_{\mathrm{#1}}}
\newcommand{\TextIf}{\mbox{ if }}
\newcommand{\QQ}{\set{\qO, \qX}}
\def \A {\mathcal{A}}
\newcommand{\Neighb}{{\mathcal N}}
\newcommand{\V}{{\mathcal V}} 	
\newcommand{\etat}[1]{{\ensuremath{\mathtt{#1}}}}
\newcommand{\qO}{{\etat{0}}}
\newcommand{\qX}{{\etat{1}}}
\newcommand{\II}{\mathcal I}
\newcommand{\DD}{\mathcal D}
\newcommand\ind{\mathds{1}}
\newcommand{\EE}{\mathcal{E}}
\newcommand\Ni{\Neighb\und{i}}
\newcommand\Np{\Neighb\und{p}}
\newcommand{\drawpattern}[1]{%
	\begin{tikzpicture}
		\foreach \line [count=\y] in #1 {
			\foreach \pix [count=\x] in \line {
      			\draw[fill=pixel \pix] (0.3*\x,-0.3*\y) rectangle +(0.3,0.3);
    		}
  		}
	\end{tikzpicture}
}
\definecolor{pixel 0}{HTML}{0000FF}
\definecolor{pixel 1}{HTML}{FFFFFF}
\def\pixels{
  {1,1,0,1,0,1,0,1,0,0,1,0,1,0,1,0},
  {0,1,1,0,1,0,1,0,1,0,0,1,0,1,0,1},
  {1,0,1,1,0,1,0,1,0,1,0,0,1,0,1,0},
  {0,1,0,1,1,0,1,0,1,0,1,0,0,1,0,1},
  {1,0,1,0,1,1,0,1,0,1,0,1,0,0,1,0},
  {0,1,0,1,0,1,1,0,1,0,1,0,1,0,0,1},
  {1,0,1,0,1,0,1,1,0,1,0,1,0,1,0,0},
  {0,1,0,1,0,1,0,1,1,0,1,0,1,0,1,0},
  {0,0,1,0,1,0,1,0,1,1,0,1,0,1,0,1},
  {1,0,0,1,0,1,0,1,0,1,1,0,1,0,1,0},
  {0,1,0,0,1,0,1,0,1,0,1,1,0,1,0,1},
  {1,0,1,0,0,1,0,1,0,1,0,1,1,0,1,0},
  {0,1,0,1,0,0,1,0,1,0,1,0,1,1,0,1},
  {1,0,1,0,1,0,0,1,0,1,0,1,0,1,1,0},
  {0,1,0,1,0,1,0,0,1,0,1,0,1,0,1,1},
  {1,0,1,0,1,0,1,0,0,1,0,1,0,1,0,1},
}
\newcommand{\figProof}{
\begin{figure}[h]
\centering
\drawpattern{\pixels}
\caption{Example of a configuration that locally looks like an archipelago.}
\label{fig:exCW}
\end{figure}

}
\newcommand{\figHomoPhase}{
\begin{figure}
\includegraphics[width=.48\linewidth]{}
\,
\includegraphics[width=.48\linewidth]{}
\caption{Time to convergence to an archipelago for different values of $ L $.
Left: $ \lambda $ varies with $\chi=0.1$.
Right: $ \chi $ varies with $\lambda=0.25$
}
\label{fig:homo}
\end{figure}
}
\newcommand{\figK}{
\begin{figure}
\centering
\includegraphics[width=.7\linewidth]{}
\caption{Quality of classification as a function of the grid width $ L $. 
(Even and odd values are shown on separate curves.) Settings are: 
$ \lambda =0.25$, $ \chi = 0.1$, $ \epsilon = 0.001$.
}
\label{fig:koupa}
\end{figure}
}
\newcommand{\cSnap}[1]{
\begin{tabular}{c}
\includegraphics[width=.2\textwidth]{fig/snap-homophase-t-#1.pdf}\\
\end{tabular}
}
\newcommand{\figSnapshotsC}{
\begin{figure}
\noindent
\begin{tabular}{c c c c}
\cSnap{0}&\cSnap{20}&\cSnap{50}&\cSnap{100}\\
\end{tabular}
\caption{Transformation of a random initial configuration into an archipelago.
}
\label{fig:snapC}
\end{figure}
}
\newcommand{\dSnap}[1]{
\begin{tabular}{c}
\includegraphics[width=.2\linewidth]{fig/snap-dc-t-#1.pdf}\\
\end{tabular}
}
\newcommand{\figSnapshotsD}{
\begin{figure}
\begin{tabular}{c c c c}
\dSnap{0}&
\dSnap{200}&
\dSnap{500}&
\dSnap{1000}\\
\end{tabular}
\caption{
Transformation of a random inital condition. Under some conditions (see Sec.~\ref{sec:densClassifRule} p.~\pageref{sec:densClassifRule}), the system will most probably converge to $ \qX^\LL$, which is a correct classification since the initial density is greater than 1/2.
}
\label{fig:snapD}
\end{figure}
}
\newtheorem{proposition}{Proposition}
\newcommand{\QED}{}
\newtheorem{lemma}{Lemma}
\newcommand{\ad}[2]{\footnotesize{#1}\\ \, \footnotesize\url{#2}\\ }
\title{Two-dimensional traffic rules \\ and the density classification problem}
\author{Nazim Fat\`es\\
\ad{Inria Nancy - Grand Est LORIA UMR 7503}{nazim.fates@loria.fr}
\\ Ir\`ene Marcovici\\
\ad{Institut \'Elie Cartan de Lorraine, Universit\'e de Lorraine}{irene.marcovici@univ-lorraine.fr}
\\ Siamak Taati\\
\ad{Mathematics Institute, Leiden University}{siamak.taati@gmail.com}
}
\begin{document}

\maketitle

\begin{abstract}
The density classification problem is the computational problem of
finding the majority in a given array of votes, in a distributed
fashion. It is known that no cellular automaton rule with binary
alphabet can solve the density classification problem.  On the other hand,
it was shown that a probabilistic mixture of the traffic rule and the majority
rule solves the one-dimensional problem correctly with a probability
arbitrarily close to one.  We investigate the possibility of a similar
approach in two dimensions.  We show that in two dimensions, the
particle spacing problem, which is solved in one dimension by the traffic rule,
has no cellular automaton solution. However, we propose exact and randomized
solutions via interacting particle systems.  We assess the performance
of our models using numeric simulations.

\end{abstract}

\newcommand{\ProofSpacingLoose}{
\begin{proof}[Proposition~\ref{prop:spacing:loose:solution}]
It is clear that the IPS $\PhiC$ constructed above is conservative.
We show that the set $C$ of sub-checkerboards is a sink for $\PhiC$.

First, note that the set $C$ is absorbing.
Indeed, in every element of $\Ce{q}\cup \Co{q}$, every $q$ is isolated,
and as a result $\PhiC(x,(k,k'))=x$ for each $x\in C$ and $(k,k')\in\II_4$
(i.e., the horizontal/vertical rules keep $x$ unchanged).
On the other hand, for each $x\in C$ and $(k,k')\in\DD_4$,
clearly we have $\PhiC(x,(k,k'))\in C$
(i.e., $C$ is invariant under the diagonal rule).
It remains to verify that $C$ is reachable from any configuration.

Let $x\in\EE$ be a configuration, and without loss of generality assume that $d_{\qX}(x)\leq 1/2$.
Let us denote the number of even cells $(i,j)$ with $x_{i,j}=\qX$ by $m_e$
and the number of odd cells $(i,j)$ with $x_{i,j}=\qX$ by $m_o$.
If $m_e=0$ or $m_o=0$, then $x$ is already in $C$, and there is nothing to prove.
So, assume $m_e$ and $m_o$ are both non-zero.
We describe a path from $x$ to a configuration $y$ that has $m_e+1$ even cells in state $\qX$.
The claim would then follow by induction.

Let $M=\{0,1\}^2\subseteq\LL$ be a window consisting of a $2\times 2$ square.
We first show that there is a position $c$ such that the $2\times 2$ pattern $(x_{c+k}: k\in M)$ contains either a single $\qX$ or two adjacent $\qX$s and two adjacent $\qO$s (see Fig.~\ref{fig:spacing:loose:illustration}). 
Indeed, there are three possibilities:
\begin{enumerate}
	\item Every $2\times 2$ window has at least $2$ occurrences of $\qX$.
		In this case, every $2\times 2$ window must have precisely two occurrences of $\qX$, for otherwise $d_{\qX}(x)>1/2$.
		If every $2\times 2$ window is "checkered", then $x$ must be in $\Ce{\qX}\cup \Co{\qX}$, which contradicts the assumption.
		Otherwise, $x$ has a $2\times 2$ pattern with two adjacent $\qX$s and two adjacent $\qO$s.
	\item There is a $2\times 2$ window with a single $\qX$.
	\item There is a $2\times 2$ window with no $\qX$s.  Let us call such a window \emph{empty}.
		Note that every window that either horizontally or vertically overlaps an empty window is either itself empty or has one of the two desired forms.
		Since we have assumed that $x$ has at least one~$\qX$ ($m_e,m_o>0$), not every $2\times 2$ window can be empty.
		Consider a path from an empty window to a non-empty window, obtained by vertical and/or horizontal moves.
		The first time such a path reaches a non-empty window, we see a pattern with one of the two desired forms.
\end{enumerate}

Let us consider a $2\times 2$ window $c+M$ that has a single $\qX$.
Without loss of generality, we can assume that this single $\qX$ is on an even cell.
Since $m_o>0$, there is at least one $\qX$ somewhere outside $c+M$ on an odd cell.
By applying the diagonal rule, we can move this single $\qX$ to $c+M$,
without changing any other cell, hence obtaining a new configuration that differs from $x$ in exactly two cells.
In this new configuration, the window $c+M$ has two adjacent $\qX$s and two adjacent~$\qO$s.

Consider finally a $2\times 2$ window $c+M$ with two adjacent $\qX$s and two adjacent~$\qO$s.
Let $k$ be the odd cell in $c+M$ that has a $\qX$ and $k'$ be its adjacent cell in $c+M$ having state~$\qO$.
Applying the interaction rule on pair $(k,k')$ we end up in a configuration $y$ with one more even cell in state~$\qX$,
hence concluding the proof.\QED\end{proof}
}


\newcommand{\ProofLemmaEns}{
\begin{proof}[of Lemma~\ref{lem:ens}]
For $x\in\EE$ and $q\in \QQ$, let us set 
$E^q(x)=\sum_{(i,j)\in\II_4} \ind[ x_i= x_j = q ].$

For a configuration $x\in\EE_k$, if we count in two different ways the number of pairs $u\in \II_4$ with one cell in state $\qO$ and the other in state $\qX$, we obtain the equality $\card \II_4-E(x)=4k- E^{\qX}(x)$. The energy $E(x)$ is thus minimal within $\EE_k$ if and only if $E^{\qX}(x)$ is minimal. In particular, if there are configurations $x\in\EE_k$ such that $E^{\qX}(x)=0$ (which is equivalent to $x\in\A_{\qX}$), then these are exactly the configurations of minimal energy. Similarly, if there are configurations $x\in\EE_k$ such that $E^{\qO}(x)=0$ (which is equivalent to $x\in\A_{\qO}$), then these are exactly the configurations of minimal energy. Since the grid is even-sized, one can check that whatever the value of $k$, the set $\EE_k\cap \A$ is non-empty. And we have $\EE_k\cap \A=\EE_k\cap\A_q,$ where $q$ is the minority state (or any state in case of equality). 
\QED\end{proof}
}


\newcommand{\ProofPropExistence}{
\begin{proof} By contradiction, assume that we have an IPS rule that is a solution to the strict spacing problem. 
For $c\in\Z^2$, and $r\geq 1$, let us introduce the notation $B(c,r)=\{c+\delta \, ; \; ||\delta||_1\leq r\}$.

The interaction and the perception neighbourhoods being bounded, there exist $r\geq 1$ such that $\Np\subset B(0,r)$, and $\Delta\geq 1$ such that for any $(c,c')\in\II$, the set $B(c,r)\cup B(c',r)$ has a diameter smaller than $\Delta$.

Let us consider the lattice $\LL=(\Z/n\Z)^2$, with $n=4k$, for some $k>\Delta$. 
We define the configuration $x\in\EE$, by:
$x_{i,j}= \qO $ if and only if 
$(j-i)\in\{0\}\cup\{2i+1 ; 0\leq i\leq k-1\}\cup\{2k+2i ; 1\leq i\leq k-1\}$.
This configuration is represented on Figure~\ref{fig:exCW} for $k=4$. It is not an archipelago. Moreover, observe that every $\qX$ has at least two adjacent cells in state $\qO$, and the other way round. 
Note that this configuration has the same density of $\qO$s and $\qX$s, but it is also possible to build other counter-examples with various densities.

Let $(c,c')\in\II^2$ be such that $x_c\neq x_{c'}$. 
By construction of $x$, since $k>\Delta$, one can check that on the set of cells $B(c,r)\cup B(c',r)$, either there are no pairs of adjacent $\qX$s or there are no pairs of adjacent $\qO$s. Let us for example assume that there are no pairs of adjacent $\qX$s.

We now consider the configuration $y$ that coincides with $x$ on the set $B(c,r)\cup B(c',r)$ and for which all the cells outside this set are in state $\qO$. The configuration $y$ is a $\qX$-archipelago. For configuration $y$, the exchange between the states of cells $c$ and $c'$ is forbidden, since it would create a pair of adjacent $\qX$s (because among $c$ and $c'$, the cell in state $\qX$ has at least two adjacent cells in state $\qO$).

Let us go back to $x$. Since $x$ and $y$ are locally the same, the exchange of $c$ and $c'$ is forbidden. This implies that configuration $x$ is a fixed point of the dynamics, which is a contradiction, because it is not an archipelago.

\QED\end{proof}
}
\newcommand{\PhiC}{\Phi\und{C}}
\newcommand{\PhiT}{{\tilde{\Phi}}\und{C}}
\newcommand{\PhiD}{\Phi\und{D}}

\section{Introduction}

Let us imagine a medium composed of a great number of cells arranged regularly on a grid. Each cell is linked with its immediate neighbours and the only thing it can do is to change its own state according to the state of its neighbours. Can we compute with such a medium? And what happens if the updates occur at random times? And what if the cells are subject to noise?

%
In order to study this robustness mechanisms on a mathematical basis, we will here focus on two simple computational problems.
The first problem is the \emph{density classification problem}, which is the problem of finding the majority state in a distributed fashion.
In the original setting, the computational protocol is required to be local and parallel, and use no extra memory other than the evolving configuration itself.
The protocol is also required to be scalable, which means it must perform the task on configurations of arbitrary size.
In other words, we look for a cellular automaton rule that performs the task.
In this paper, we also consider variants of this problem in which the process is allowed to be asynchronous, non-deterministic or random.

This problem has attracted a considerable amount of attention these last years. 
It is trivial in most settings but it is not easy to solve in the case of cellular automata. 
The difficulty comes from the necessity to reach a consensus on the state of the cells: the system should converge to a situation with all 1s or all 0s, depending on whether the initial state contains more 0s or more 1s, respectively.

Inspired by the work of G\`acs, Kurdiumov and Levin, in 1988, Packard formulated this problem as a challenge to study genetic algorithms~\cite{Pac88}.
This triggered a wide competition to find rules with an increasing quality of classification. 
In 1995, Land and Belew proved that no perfect solution exists for one-dimensional deterministic systems~\cite{Lan95}. 
Recently, this fact was re-demonstrated with a simpler argument and the proof was extended to probabilistic rules and to any dimension~\cite{BFMM13}. It was even shown that for any candidate solution there are configurations with a density close to 0 and 1 that are misclassified~\cite{KG12}.

Since then, different variants of the problem have been proposed and it has been shown by various authors that relaxing one of the conditions of the problem is often sufficient to find perfect solutions~\cite{Oli14}. In particular, Fuk\'s proposed to combine two rules sequentially to obtain a perfect solution, see Ref.~\cite{Fuk15} and references therein.
Probabilistic cellular automata could provide another interesting framework: it was discovered that although no perfect rule exists, it is possible to find a family of one-dimensional nearest-neighbour rules for which the probability of making an error of classification can be made as small as wanted~\cite{FatTocs13}. The perfect solution can thus be approximated -- but not reached ! -- at the cost of an increase in the average time to reach a consensus.

The construction proposed for building this family of rules consists of mixing stochastically two well-known rules: the traffic rule, which introduces space between particles, and the majority rule, which has a ``homogenising" effect.
In this text, we ask whether there also exist a ``close-to-perfect" solution for two-dimensional cellular automata. 
At first sight, one does not see why the problem should be significantly different for two-dimensional systems. However, there is no such thing as a ``traffic" rule in two dimensions (2D). If we decompose a 2D grid in layers and  apply a classical traffic rule on each layer, then different consensuses might be attained and there is no obvious means on how one can obtain the ``right" global consensus from a collection of local consensuses.

We call the problem that is solved by the traffic rule in one dimension the {\em particle spacing problem}. 
We tackle this problem in two dimensions.
We then (partially) solve the density classification problem by combining our particle spacing model with a local majority rule.

The outline of the article is as follows. 
After presenting the basic definitions and properties of our models in Sec.~\ref{sec:basis},
we show the advantage of using interacting particle systems to tackle the problem (Sec.~\ref{sec:ipsAnalysis}).
We then present a concrete solution and 
analyse its behaviour with numerical simulations in Sec.~\ref{sec:densClassifRule}.

\section{Basics}
\label{sec:basis}

\subsection{Setting}

In dimension $d\geq 1$, we set the cellular space to be a grid with periodic boundary conditions, defined by $\LL = (\Z/n_1\Z) \times\cdots\times (\Z/n_d\Z)$, for some $n_1,\ldots,n_d\geq 1$. The number of cells of $\LL$ is $N_{\LL}=n_1\cdots n_d$. We say that the grid is \emph{even-sized} if $n_1,\ldots, n_d$ are all even. 

Each cell of this space can hold a binary state, so that the set of states is denoted by $ Q = \set{ \qO, \qX } $.

The set of configurations is denoted by $\EE=\{\qO,\qX\}^{\LL}$. 

For a configuration $x\in \EE$, and a state $q\in Q$, we define the density of state~$q$ by:
$d_q(x)={1\over N_{\LL}}\card\{i\in \LL \, ; \, x_i=q\}.$

For a given configuration $x\in \EE$, we say that cell $i$ is \emph{isolated} if none of its adjacent cells is in state $x_i$.

For $q\in Q$, we say that a configuration $x\in \EE$ is a \emph{$q$-archipelago} if all the cells in state $q$ are isolated, i.e., if $x$ does not contain two adjacent cells in state $q$. We denote by $\A_q$ the set of $q$-archipelagos. In particular, if $x\in \A_q$, then $d_q(x)\leq 1/2$.

We also introduce $\A=\A_{\qO}\cup\A_{\qX}$, the set of all archipelagos.

\figSnapshotsD

\subsection{Presentation of the problem}

Recall that in this paper, we study two computational problems.
The \emph{density classification task} is the task of transforming a given configuration $x\in Q^\LL$ into one of the two uniform configurations $\qX^\LL$ or $\qO^\LL$ depending on which of $\qX$ or $\qO$ has strict majority in $x$ (see Fig.~\ref{fig:snapD}).
More specifically, given an input $x\in Q^\LL$, a computational process performing the density classification task must return $\qX^\LL$ if $d_{\qX}(x)>1/2$ and $\qO^\LL$ if $d_{\qO}(x)>1/2$.
(Generally, the case $d_{\qX}(x)=d_{\qO}(x)$ is avoided.)

Our approach to solve the density classification problem is via another problem which we call the particle spacing problem.
The \emph{particle spacing problem} is the computational problem of rearranging the ``particles'' (say, symbols $\qX$) on a configuration $x\in Q^\LL$ so as to obtain an archipelago configuration (see Fig.~\ref{fig:snapC}).
Again we require the computational process to be local and scalable, but we also require it to be \emph{conservative}: at every step of the process, the number of particles (symbols~$\qX$) must be preserved.

There are two possible variants for the latter problem.
In the \emph{strict} spacing problem, we require that the sets $\A_{\qX}$ and $\A_{\qO}$ are \emph{absorbing}, in the sense that as soon as the process enters $\A_q$ (for either $q=\qX$ or $q=\qO$), it cannot leave it.
In the \emph{loose} variant of the problem, we require the process to eventually remain in~$\A_q$.
Note that the latter is equivalent to the condition that the computation reaches an absorbing subset of $\A_{\qX}$ or $\A_{\qO}$.


For both problems, our purpose is to build a solution with a cellular system; this means that we have a set of interacting components, which can have a deterministic or stochastic behaviour, and interact only locally. We will here consider cellular automata (deterministic or stochastic) and interacting particle systems.

\subsection{Known results with cellular automata}

\figSnapshotsC

We now present the principal known results concerning solutions of the density classification problem and of the particle spacing problem, using cellular automata.

A cellular automaton $\function{F}{\EE}{\EE} $ is defined by a neighbourhood $\Neighb = (v_1, \dots, v_k) \in \LL^k$, and by a local rule $\function{f}{Q^k}{Q}$, which defines the global rule 
$F$, mapping a configuration $x$ to the configuration $F(x)$ defined by:
\[ 
\forall c \in \LL, F(x)_c= f\big( x_{c+v_1}, \dots, x_{c + v_k} \big) .
\]

\begin{proposition}
For any $d\geq 1$, there is no deterministic cellular automaton solving the density classification problem.
For $ d= 1$, this means that there is no local rule $f$ such that for any $n\geq 1$, and any $x\in \QQ^{\Z/n\Z}$, 
$$d_q(x)>1/2 \implies \exists T\geq 0, \forall t\geq T, F^t(x)=q^{\Z/n\Z}.$$
\end{proposition}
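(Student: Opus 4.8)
The plan is to argue by contradiction: assume a deterministic cellular automaton $F$ with neighbourhood $\Neighb$ of radius $r$ solves the density classification problem in dimension one, and derive an impossibility from the locality of $F$. Two preliminary observations set the stage. First, the uniform configurations $\qO^\LL$ and $\qX^\LL$ must be fixed points of $F$: since $d_{\qX}(\qX^\LL)=1>1/2$ the orbit of $\qX^\LL$ has to converge to $\qX^\LL$ itself, and by determinism this forces $F(\qX^\LL)=\qX^\LL$ (symmetrically for $\qO^\LL$). Second, I would record the finite-propagation-speed property: the value $F^t(x)_c$ depends only on the cells of $x$ in the window of radius $rt$ around $c$. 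So if two configurations $x,y$ agree on a window $W$ of half-width $rt$ centred at $c$, then $F^s(x)_c = F^s(y)_c$ for all $s\le t$.

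The core of the argument is then a \emph{window-agreement} construction. For a large width $w$, I would build two configurations sharing the same central block of width $w$ but differing in their ``far field'': a configuration $x^{(w)}$ whose overall density satisfies $d_{\qX}>1/2$, and a configuration $y^{(w)}$ whose far field is so heavily weighted towards $\qO$ that $d_{\qO}>1/2$, while $x^{(w)}$ and $y^{(w)}$ coincide on the central window. Since $F$ is assumed to classify correctly, the value at the centre cell $c$ must eventually stabilise at $\qX$ under $x^{(w)}$ and at $\qO$ under $y^{(w)}$; yet by finite propagation speed these two orbits agree at $c$ for every time $s\lesssim w/(2r)$. As $w\to\infty$ the agreement interval grows without bound, which rules out any single CA stabilising the centre cell to the \emph{correct but opposite} value in the two families simultaneously.

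The step I expect to be the main obstacle is turning this finite-time agreement into an actual contradiction, because on a finite ring $\Z/n\Z$ information eventually wraps around, so for each fixed $w$ the agreement window is only finite and slow convergence is not in itself forbidden. The clean way to close this gap, which I would follow (as carried out in~\cite{BFMM13}), is to pass to the infinite lattice $\Z$ with a frequency-based notion of density: take the pairs $x^{(w)},y^{(w)}$ with ever-larger shared windows, and extract by a compactness/diagonal argument a limiting configuration $z\in\{\qO,\qX\}^{\Z}$ on which the centre cell never stabilises, contradicting the required convergence to a uniform state. The finite-ring formulation of the proposition then follows by the same localisation applied to sufficiently large rings, and the passage to general $d\ge 1$ is obtained by embedding the one-dimensional obstruction into a $d$-dimensional grid (making the transverse directions trivial), so that no extra idea is needed beyond the one-dimensional construction.
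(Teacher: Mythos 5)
The paper itself does not prove this proposition: it states it and defers to Land and Belew~\cite{Lan95} and to the simplified argument of~\cite{BFMM13}, so your proposal has to stand on its own. Unfortunately its central mechanism does not close. The window-agreement step is correct as far as it goes: if $x^{(w)}$ and $y^{(w)}$ agree on a window of half-width $rt$ around $c$, then $F^s(x^{(w)})_c=F^s(y^{(w)})_c$ for all $s\le t$. But the hypothesis you are trying to contradict only asserts that \emph{some} finite time $T$ exists after which each orbit is uniform, with no bound on $T$ in terms of the ring size or the window width. For every fixed $w$, both orbits are free to reach their (opposite) uniform limits at times far beyond the agreement horizon $w/(2r)$, so no finite $w$ yields a contradiction. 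You correctly flag this as the main obstacle, but the proposed repair --- pass to $\Z$ by compactness and extract a configuration $z$ whose centre cell never stabilises --- fails for two reasons. First, nothing forces $F^t(z)_c$ to fail to converge: you only know it agrees with $F^t(x^{(w)})_c$ and $F^t(y^{(w)})_c$ up to time about $w/(2r)$, and each of those sequences remains free to settle afterwards, so the limit sequence could perfectly well be eventually constant. Second, even if such a $z$ existed, it would contradict nothing: the hypothesis constrains $F$ only on finite rings and says nothing about orbits on $\Z$. (The paper's closing section underlines how different the infinite-lattice problem is --- in its natural measure-theoretic formulation it is actually \emph{solvable} on $\Z^2$ by Toom's rule --- so reducing the finite-ring impossibility to a statement about $\Z$ is the wrong direction.)

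The proofs the paper points to avoid this trap by arguing combinatorially rather than through propagation speed: roughly, one takes a configuration with a bare majority of $\qX$s that is correctly driven to the all-$\qX$ configuration and performs surgery on it (periodic replication on a larger ring, insertion of blocks, flipping a sparse set of cells), exploiting shift-equivariance and counting to manufacture a configuration whose majority is $\qO$ but which the rule is nevertheless forced to misclassify. Your two preliminary observations (uniform configurations are fixed points; finite propagation speed) are correct and do appear in those arguments, and your reduction of $d\ge 2$ to $d=1$ by making the transverse directions trivial is fine in principle; but the core of the one-dimensional impossibility is missing, and it cannot be supplied by the locality-plus-compactness route you outline.
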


The first proof was given by Land and Belew in 1995 for dimension $d=1$~\cite{Lan95}. 
A simplified proof was proposed in 2013 for any dimension $d\geq 1$~\cite{BFMM13}. These results apply to deterministic cellular automata. One can ask whether stochastic transition rules could help to solve the problem. Using the same argument as for deterministic cellular automata, one can prove that there are no probabilistic cellular automata solving perfectly the density classification problem~\cite{BFMM13}.

However, for $d=1$, Fat\`es has provided a family of probabilistic cellular automata solving the density classification problem with an arbitrary precision~\cite{FatTocs13}. This means that the probability of making a bad classification can be reduced to as low as necessary, at the cost of an increase of the average time of convergence to the uniform configuration.

The family of rules is defined with a real parameter $\epsilon>0$. 
The local rule consists at each time step, for each cell independently, in applying the traffic rule with probability $1-\epsilon$ and the majority rule with probability~$\epsilon$. 
The traffic rule (rule 184 with Wolfram's notations) is a conservative rule, which moves the $\qX$s to the right whenever possible. 
It has a spacing effect. The majority rule allows the convergence to the uniform fix point, once particles have been spaced.

In order to extend this result to higher dimensions, one would like to design a rule having the same behaviour as the traffic rule, that is, to be able to compose a rule that solves the spacing problem with a majority rule. 
Unfortunately, this is not possible in the classical framework of cellular automata.

\begin{proposition} \begin{enumerate}
\item In dimension $1$, the traffic cellular automaton $F_{184}$ solves the spacing problem. 
Precisely, it satisfies: for all $n\geq 1$ and all $x\in \QQ^{\Z/n\Z}$,
$$d_q(x)\leq 1/2 \implies \forall t\geq n/2, \; F_{184}^t(x)\in \A_{q}.$$
Furthermore, $\forall q\in\QQ, F_{184}(\A_q)\subset \A_q$, so $F_{184}$ is a solution to the strict spacing problem.
\item In dimension $d\geq 2$, there are no cellular automata that solve the spacing problem.
\end{enumerate}
\end{proposition}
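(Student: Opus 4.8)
The plan is to treat the two dimensions by opposite means: an explicit study of the traffic dynamics for the one-dimensional claim, and an impossibility construction in dimension two. For item~(1) I would first rewrite $F_{184}$ as a particle system. Its transition table says that a cell in state $\qX$ turns to $\qO$ exactly when its right neighbour is $\qO$, and a cell in state $\qO$ turns to $\qX$ exactly when its left neighbour is $\qX$; equivalently, in one synchronous step every pattern $\qX\qO$ is rewritten as $\qO\qX$ and nothing else changes, i.e. every particle whose right cell is empty moves one step to the right. The absorbing property is then immediate: if $x\in\A_{\qX}$ then each $\qX$ has a $\qO$ on both sides, so every particle moves right by one and the gaps between consecutive particles, all of length at least one, are merely translated; hence $F_{184}(x)\in\A_{\qX}$. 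The inclusion $F_{184}(\A_{\qO})\subseteq\A_{\qO}$ then follows from the symmetry of rule~$184$ under simultaneous state-complementation $\qO\leftrightarrow\qX$ and spatial reflection, which exchanges $\A_{\qX}$ with $\A_{\qO}$ and turns ``particles move right'' into ``holes move left''.

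The substantive part of~(1) is the bound $t\ge n/2$. Assume $q=\qX$ and $k=\card\{i:x_i=\qX\}\le n/2$. I would track the \emph{jams}, the maximal blocks of consecutive $\qX$s: a jam loses its leading particle at each step, and can only grow at its rear when fed by a particle one cell behind it. The clean way to control this globally is the ballistic-annihilation picture relative to the alternating background $\qO\qX\qO\qX\dots$: the defects are the patterns $\qX\qX$ and $\qO\qO$, each moves rigidly at unit speed, the two species move in opposite directions, and a $\qX\qX$ annihilates with a $\qO\qO$ on collision. Since $k\le n/2$ there is no surplus of $\qX\qX$-defects, so each is annihilated; two defects approach at relative speed two around a ring of circumference $n$, so any surviving $\qX\qX$ meets its partner within $n/2$ steps, after which the configuration lies in $\A_{\qX}$ and stays there by the absorbing property. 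The genuinely technical point is turning this picture into a proof with the sharp constant $n/2$; I would do the bookkeeping with the height function $h$ defined by $h_{i+1}-h_i=1-2\,\ind[x_i=\qX]$, under which $F_{184}$ simply raises every strict local minimum of $h$ by two, and read the defect speeds and collision times off this monotone evolution.

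For item~(2) I argue by contradiction: assume a cellular automaton $F$ of neighbourhood radius $\rho$ is conservative and solves the spacing problem on the tori $(\Z/n\Z)^2$. I would exhibit a non-archipelago configuration $x$ that $F$ cannot usefully alter. Take $x$ constant along each diagonal $j-i=d$, its value a function of $d\bmod n$ equal to the alternating word $\qO\qX\qO\qX\dots$ except for one inserted $\qO\qO$ and one inserted $\qX\qX$ placed antipodally; choosing $n$ large compared with $\rho$ separates the two defects by more than the neighbourhood diameter. This $x$ is maximally frustrated (every $\qX$ has at least two $\qO$-neighbours and every $\qO$ at least two $\qX$-neighbours) and is not an archipelago. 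Its key feature is that each window of radius $\rho$ contains equal adjacencies of at most one symbol: near the $\qO\qO$ defect there are no adjacent $\qX$s, near the $\qX\qX$ defect no adjacent $\qO$s, and far from both none at all. Hence every window of $x$ is the restriction of a genuine $\qX$-archipelago or of a genuine $\qO$-archipelago (complete with $\qO$s, resp. $\qX$s). Now if $F(x)\neq x$ there is a differing cell $c_0$, lying near at most one defect. Completing the window of $c_0$ to the compatible archipelago $y$ and using that an all-$\qO$ (resp. all-$\qX$) neighbourhood is fixed by the conservative $F$, the conserved particle count on $y$ forces a change of the form $\qO\to\qX$ at a cell with at least two $\qX$-neighbours (or the symmetric $\qX\to\qO$); this creates a pair of adjacent equal symbols, so $F$ maps the archipelago $y$ out of its class.

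This already refutes the \emph{strict} spacing problem, where $\A_{\qX}$ and $\A_{\qO}$ must be absorbing. The main obstacle, and where I expect the real work to lie, is the \emph{base} (loose) problem stated here: there $F$ is allowed to leave $\A$ transiently, so a single ejection is not yet fatal, and one must instead prove outright that $F(x)=x$, whence the orbit of $x$ never reaches $\A$. I would attack this by iterating the local argument to show that any alteration of $x$ merely transports its frustration rather than removing it anywhere, so that no configuration in the orbit can become spacing-free. Converting ``the frustration cannot be removed'' into a proof that an archipelago is never reached — rather than merely that some archipelago is momentarily spoiled — is the delicate step on which the whole impossibility for the loose variant rests.
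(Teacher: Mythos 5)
Your treatment of item~(1) is sound and in fact more substantial than the paper's, which dismisses it as a ``folk'' result: the particle picture, the absorbing property via gap preservation, the complementation--reflection symmetry of rule 184 for the $\A_{\qO}$ case, and the defect/ballistic-annihilation sketch for the $n/2$ bound are the standard correct route, modulo the bookkeeping you acknowledge.

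Item~(2), however, has a genuine gap, and you have correctly located it yourself in your last paragraph. Your local-window argument is the one that works for \emph{interacting particle systems} (where a single update touches only one designated pair of cells, so conservation pins down exactly where the compensating flip occurs and the frustrated structure then forces an adjacency); for a synchronous cellular automaton it breaks down. Concretely: from $F(x)_{c_0}\neq x_{c_0}$ you build an archipelago $y$ agreeing with $x$ near $c_0$ and correctly deduce $F(y)\neq y$, but you then assert that the compensating $\qO\to\qX$ flip must land on a cell with two $\qX$-neighbours. Nothing forces this: in one synchronous step every cell of the window may change simultaneously, and $F(y)$ could perfectly well be a \emph{different} archipelago with the same particle count (the one-dimensional traffic rule itself maps archipelagos to distinct archipelagos). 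So even the strict case is not closed, and the loose case --- which is what the proposition asserts --- is left entirely open. The paper's argument is both simpler and complete: a deterministic cellular automaton commutes with translations, so any translation symmetry of the initial configuration is preserved by $F^t$ for all $t$. Take $x$ to be all $\qO$s except one full row of $\qX$s. Invariance under the unit horizontal shift forces every $F^t(x)$ to be constant on each row; conservativity keeps the number of $\qX$s equal to the row length, so some row of $F^t(x)$ is entirely $\qX$, hence contains adjacent $\qX$s, hence $F^t(x)\notin\A_{\qX}$ for every $t$, even though $d_{\qX}(x)\leq 1/2$. This kills the strict and the loose variants at once. (Your diagonal-constant configuration also carries a translation symmetry, by $(1,1)$, but that symmetry is compatible with being an archipelago --- the checkerboard has it --- which is why it cannot play the same role.)
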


\begin{proof} The fact that the traffic cellular automaton spaces configurations is a ``folk'' result.

Let now $F$ be a $d$-dimensional deterministic cellular automaton. 
If $ x \in \EE $ is a configuration with a symmetry of translation,
then the symmetry is conserved by the evolution of the automaton.
Formally, if there exists $ \delta \in \Z^d$ such that
$ \forall c \in \LL, \, x_c= x_{c+\delta} $, then
$
\forall t \in \N, \, \forall c \in \LL, \, F^t(x)_c = F^t(x)_{c+\delta} 
$.
As a consequence, deterministic cellular automata can not solve the particle spacing problem in dimension $d\geq 2$. To see why, simply consider a configuration with all $\qO$s, except one line which is made of cells with all $\qX$s: if the rule is conservative, this line can not disappear.\QED\end{proof}

By its very nature, a ``truly'' probabilistic rule can not solve the particle spacing problem. Indeed, as soon as there exists a configuration for which one cell has a non-deterministic outcome, we cannot ensure that the number of particles will be preserved.
We leave open the question as to whether there exists a probabilistic rule which would solve the density classification problem with an arbitrary precision, in dimension 2 or more.

\section{Particle systems solutions to the spacing problem}\label{sec:IPS}
\label{sec:ipsAnalysis}

We have seen that deterministic cellular automata are in some sense too rigid to allow us to solve the spacing problem on grids, because they do not allow to break translation symmetries. On the other hand, probabilistic cellular automata can break these symmetries, but they do not allow an exact conservation of the number of particles. 

We now propose to combine the strength of both models with interacting particle systems: we update cells by \emph{pairs}, which allows conservation of particles, and the pairs are chosen \emph{randomly}, which allows us to break symmetries. The effect of the local rule is to exchange the cell's states or to leave them unchanged, depending on the states of the neighbouring cells of the pair. 

Let us formalize the definition of the \emph{interacting particle systems} (IPS) we consider. From now on, we will consider two-dimensional grids. Note that most results can be adapted to higher-dimensional lattices.

\subsection{Our model of IPS}

Let $\Neighb\und{i}$ and $\Neighb\und{p}$ be two finite tuples of $\Z^2$, corresponding to the \emph{interaction neighbourhood} and the \emph{perception neighbourhood}. 

We define the set of interacting pairs by 
$$\II=\{(c,c+\delta)\, ; \; c\in\LL, \; \delta\in\Ni\}.$$

The global rule is a function $\function{\Phi}{\EE\times\II}{\EE}$. It takes in argument a configuration and a pair of cells to update, and maps it to the configuration that represents the next state of the system. The image $y=\Phi(x,(c,c'))$ is defined by:
$$(y_c,y_{c'})=\phi\big((x_{c+k},k\in \Np),(x_{c'+k},k\in \Np)\big) \mbox{ and for } d\notin\{c,c'\}, y_d=x_d,$$
where $\function{\phi}{\QQ^{\Np}\times\QQ^{\Np}}{\QQ^2}$ is the local rule that gives the new states of the pair of cells as a function of the states of their perception neighbourhood.

This rule is conservative if the image $y$ always satisfies $(y_c,y_{c'})=(x_c,x_{c'})$ or $(y_c,y_{c'})=(x_{c'},x_c)$.

Let $(u_t)_{t\in\N} \in \II^\N$ be a sequence of interacting pairs (in the following, the $u_t$ are chosen uniformly at random independently in $\II$). Starting from an initial condition $x \in \EE $, the system will evolve according to the sequence of states (or orbit) $ (x^t)_{t\geq 0}$ defined by $ x ^ 0 = x$ and $ x^{t+1}=\Phi(x^t, u_t) \mbox{ for any } t\geq 0$.

Given an IPS rule $\Phi$, we say that a set $A\subset \EE$ is an \emph{absorbing set} if: 
$\forall x\in A, \forall u\in\II, \Phi(x,u)\in A.$
We say that $A$ is \emph{reachable from any configuration} if: 
$$\forall x^0\in A, \; \exists T\geq 0, \; \exists (u_t)_{1\leq t\leq T} \in \II^T, \; x^T\in A.$$
We say that $A$ is a \emph{sink} if it is an absorbing set, which is reachable from any configuration.

In terms of IPS, we say that $\Phi$ is a solution to the strict spacing problem if it is a conservative IPS such that the set $\A$ of archipelagos is a sink.

\subsection{No solution to the strict spacing problem}\label{sec-nosol}

In order to solve the spacing problem, an idea is to design a rule, such that its evolution would result in decreasing the \emph{energy} of the configuration, that is, the number of adjacent cells in same state. This idea will be used to propose an approximate solution in Section~\ref{sec:glauber}. 
However, the next proposition proves that this idea does not allow us to solve the strict spacing problem. This is due to the existence of configurations that are not archipelagos but for which each cell can ``believe'' that it is part of an archipelago (by looking at the cells located within a finite range), see Fig.~\ref{fig:exCW}.

\begin{proposition}\label{prop:existence}
There is no IPS solution to the strict spacing problem.
\end{proposition}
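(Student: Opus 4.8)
The plan is to argue by contradiction: suppose some conservative IPS rule $\Phi$ does have $\A$ as a sink. The tension I would exploit is that the local rule $\phi$ decides whether to exchange an interacting pair $(c,c')$ by inspecting only the bounded perception neighbourhoods around $c$ and $c'$, whereas being an archipelago is a \emph{global} property. So I would try to build a configuration $x\notin\A$ on which $\phi$ is fooled into never exchanging anything: every interacting window of $x$ should look locally identical to a window arising inside a genuine archipelago, so that $x$ becomes a fixed point of $\Phi(\cdot,u)$ for every $u\in\II$. Such an $x$ can never reach $\A$, which contradicts reachability and hence the sink property (see the striped configuration of Fig.~\ref{fig:exCW}).

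To make ``fooled locally'' precise I would first fix the relevant scales. Since $\Np$ and $\Ni$ are finite, pick $r\geq 1$ with $\Np\subseteq B(0,r)$, where $B(c,r)=\{c+\delta\,;\,\|\delta\|_1\leq r\}$, and a constant $\Delta\geq 1$ bounding the diameter of $B(c,r)\cup B(c',r)$ over all interacting pairs $(c,c')\in\II$. I would then work on a torus $\LL=(\Z/n\Z)^2$ with $n=4k$ for some $k>\Delta$, and construct an explicit periodic, diagonally striped pattern $x$ with two properties: (i) $x\notin\A$, and (ii) every cell of $x$ has at least two neighbours in the opposite state.

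Given such an $x$, for any interacting pair $(c,c')$ with $x_c\neq x_{c'}$ I would use $k>\Delta$ to guarantee that inside the window $B(c,r)\cup B(c',r)$ the pattern contains no monochromatic adjacency of one of the two colours, say no two adjacent $\qX$s. Replacing everything outside this window by $\qO$ yields a configuration $y$ that agrees with $x$ on both perception neighbourhoods and genuinely lies in $\A_\qX$. Because every cell of $x$ has at least two opposite-state neighbours, exchanging $c$ and $c'$ in $y$ necessarily creates a pair of adjacent $\qX$s (at least one of the $\qX$-neighbours of the cell that becomes $\qX$ survives the swap), so the exchange would take $y$ out of $\A_\qX$; since $\A$ is absorbing, $\phi$ must leave this pair unchanged on $y$. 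As $\phi$ reads only the identical perception neighbourhoods, it also leaves $(c,c')$ unchanged on $x$. Pairs with $x_c=x_{c'}$ cannot alter $x$ either, by conservativity. Hence $x$ is a fixed point lying outside $\A$, the desired contradiction.

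The main obstacle is the construction of $x$ itself: it must simultaneously fail to be an archipelago, be \emph{balanced} so that every cell has at least two opposite-coloured neighbours (this is exactly what forces any admissible swap to create an adjacency), and be free of one colour's adjacencies within every window of diameter $\Delta$ (so that each interacting window embeds into a true archipelago). Verifying these combinatorial properties of the explicit diagonal-stripe configuration on the torus --- in particular that $k>\Delta$ really does prevent \emph{both} colours from exhibiting an adjacency inside a single interaction window --- is the delicate, bookkeeping-heavy step; once it is in place, the locality of $\phi$, conservativity, and the absorption of $\A$ close the argument routinely.
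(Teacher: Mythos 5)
Your proposal is correct and follows essentially the same route as the paper's proof: the same scales $r$ and $\Delta$, the same torus of side $4k$ with $k>\Delta$, the same diagonally striped non-archipelago in which every cell has at least two opposite-state neighbours, and the same local-indistinguishability argument comparing $x$ with a genuine $\qX$-archipelago $y$ that agrees with it on $B(c,r)\cup B(c',r)$. The only difference is that the paper writes down the striped configuration explicitly (while likewise deferring the verification of its window property), whereas you state the required properties and flag their verification as the remaining bookkeeping.
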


\ProofPropExistence


\subsection{An IPS that synchronises checkerboards}
\label{sec:chckbdSynch}

We have seen that there is no IPS solution ensuring that once we have reached any archipelago configuration, we will remain in the set of archipelago configurations. At first sight, this could seem that there exist no solution to the spacing problem at all. However, to our surprise, we could notice that the loose problem is solvable. In fact, the {\em loose} problem is more demanding on the set of configurations that we do not leave. In other words, there might exist a subset $\A'$ of archipelagos such that once the configuration reaches $\A'$, it remains in it.


It can be observed that in Figure~\ref{fig:exCW}, the problem comes from the fact that there are two checkerboards of different phases. If we were able to synchronize these two checkerboards, we would reach one of the two perfect checkerboards (since both $\qO$ and $\qX$ have density $1/2$).

\newcommand{\Ce}[1]{C\und{e}^{#1}}
\newcommand{\Co}[1]{C\und{o}^{#1}}
For $q\in\QQ$, we denote:
\begin{align*}
\Ce{q}=\{x\in \EE\,;\; x_{i,j}=q \implies i+j \mbox{ is even}\},\\
\Co{q}=\{x\in \EE\,;\; x_{i,j}=q \implies i+j \mbox{ is odd}\}.
\end{align*}
These sets correspond to ``sub-checkerboards'' in the sense that in $\Ce{q}$ (resp. $\Co{q}$), state $q$ is the minority state and only appears on even (resp. odd) cells. We also introduce $C=\Ce{\qO}\cup \Ce{\qX}\cup \Co{\qO}\cup \Co{\qX}$. It is a subset of $\A$.

\begin{proposition}
\label{prop:spacing:loose:solution}
There is an IPS solution to the loose spacing problem for an even-sized grid. Indeed, the rule $\PhiC$ defined below is a conservative IPS having the property that the set $C$ is a sink.
\end{proposition}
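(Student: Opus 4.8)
The plan is to present $\PhiC$ as a superposition of two kinds of elementary moves and then to verify the two halves of the sink condition (absorbing, and reachable from any configuration) one at a time. On \emph{straight} pairs---a cell together with one of its four horizontal or vertical neighbours---I would let the rule perform a local spacing move, swapping the two states only when this decreases the number of monochromatic adjacencies and leaving the pair fixed otherwise; on \emph{diagonal} pairs---a cell together with one of its four diagonal neighbours---I would let the rule transport an isolated particle from one cell of the pair to the other. Each such move is either the identity or a transposition of the two states, so $\PhiC$ is conservative without further ado.

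For the absorbing part, the one fact that does all the work is that two adjacent cells of $\LL$ have opposite parities of $i+j$, while a diagonal step sends $(i,j)$ to $(i\pm 1,j\pm 1)$ and hence preserves $i+j \bmod 2$. Take $x\in C$, say $x\in\Ce{q}\cup\Co{q}$, so the minority state $q$ sits entirely on one parity class; then no two $q$-cells are adjacent, every $q$-cell is isolated, and a short check shows no straight swap can decrease the number of monochromatic adjacencies, so the spacing move is inactive and fixes $x$. A diagonal move keeps $q$ on its parity class and therefore keeps $x$ within $C$. Hence $\PhiC(x,u)\in C$ for all $u\in\II$.

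The real content is reachability. Given $x\in\EE$, assume without loss of generality that $d_{\qX}(x)\leq 1/2$ and write $m_e$, $m_o$ for the number of $\qX$ on even, resp. odd, cells; note $m_e+m_o\leq N_\LL/2$, so all particles can in principle be gathered on the even sublattice. If $m_e=0$ or $m_o=0$ then $x\in C$ and we are done; otherwise I would build a finite legal path that raises $m_e$ by one and lowers $m_o$ by one, and then induct downward on $m_o$. The engine of one such step is to locate a $2\times 2$ window showing a \emph{movable pattern}, namely a single $\qX$ or else two adjacent $\qX$s facing two adjacent $\qO$s. Such a window must exist unless $x$ is a perfect checkerboard (which already lies in $C$): if every window carries at least two $\qX$ then the density bound forces exactly two everywhere, and the windows are either all checkered---whence $x$ is a checkerboard---or one of them is non-checkered and so carries an adjacent $\qX$-pair; and if some window carries at most one $\qX$, then either it already has the single-$\qX$ form or it is empty, in which case walking from an empty window to its nearest non-empty neighbour produces a window of the desired shape. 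Finally, given a window with a lone (say even) $\qX$, I would complete it to an adjacent pair by routing a stray odd-cell $\qX$ into the window along diagonal hops; and given an adjacent $\qX$-pair facing a $\qO$-pair, one straight swap across the boundary between the odd $\qX$ and an even $\qO$ shifts a particle from an odd to an even cell, which is exactly the promised gain.

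The step I expect to be the main obstacle is the reachability case analysis together with the legality check: one must make sure the movable-pattern dichotomy is genuinely exhaustive on the torus and that the perfect-checkerboard exception is handled correctly, and then confirm that each prescribed move is actually permitted by $\PhiC$ at the instant it is applied. In particular, realising the diagonal transport of the stray $\qX$ as a chain of single-cell hops that avoids the other particles, and checking that the closing straight swap meets the local condition of the spacing rule, is where the care lies; the parity bookkeeping and the induction are otherwise routine.
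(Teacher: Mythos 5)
Your overall architecture is exactly the paper's: conservativity is immediate, absorption follows from the parity argument, and reachability is proved by induction on the number of even cells carrying a $\qX$, driven by the same $2\times 2$--window trichotomy (single $\qX$ / adjacent $\qX$-pair facing an adjacent $\qO$-pair / empty window with a walk to the first non-empty one). The gap is in the rule itself. The paper's $\PhiC$ swaps a straight pair $(i,j)\in\II_4$ whenever \emph{both cells are non-isolated} (each has some neighbour in its own state); your version swaps only when the swap \emph{decreases the number of monochromatic adjacencies}. These are not the same condition, and the difference is precisely what your closing step needs. In the window $\begin{smallmatrix}\qX&\qX\\ \qO&\qO\end{smallmatrix}$, swapping the odd $\qX$ with the even $\qO$ below it removes two monochromatic edges inside the window, but each of the two swapped cells has two further neighbours outside the window; if the odd $\qX$ sees two outside $\qO$s and the even $\qO$ sees two outside $\qX$s, the swap creates four new monochromatic edges and the net energy change is $+2$, so your rule forbids it. Under the paper's condition the swap is always legal there, because the two $\qX$s witness each other's non-isolation and likewise the two $\qO$s, independently of the outside cells. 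This is not a repairable detail of "care": the paper argues in Section~\ref{sec-nosol} (and again for the Glauber dynamics at $\beta=\infty$) that strict energy descent has non-archipelago fixed points, e.g.\ configurations made of two out-of-phase checkerboards, so an energy-decreasing straight move cannot yield reachability of $C$.

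A second, smaller instance of the same problem: you restrict diagonal moves to transporting an \emph{isolated} particle, whereas the paper's diagonal rule is the unconditional exchange $\tau(x,(i,j))$ for $(i,j)\in\DD_4$. Your routing step ("diagonal hops that avoid the other particles") then carries a legality burden that you flag but do not discharge; with the unconditional diagonal rule there is nothing to check, since one simply takes the odd-cell $\qX$ diagonally closest to the target window and every hop either moves it or is a no-op that lets you restart from a closer $\qX$. To repair the proof, adopt the paper's two local conditions (non-isolation for straight pairs, unconditional exchange for diagonal pairs); your induction and case analysis then go through verbatim. Note also that absorption must be rechecked for the corrected rule, which is immediate: in a sub-checkerboard every minority-state cell is isolated, so every straight pair with distinct states contains an isolated cell and the swap condition fails.
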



To define $\PhiC$ with a local description, we introduce its interaction neighbourhood $\Ni=\{-1,0,1\}^2\setminus\{(0,0)\}$, that is, we allow interactions between a cell and its eight nearest-neighbours. The perception neighbourhood corresponds to von Neumann neighbourhood, that is, $\Np=\{(0,0),(0,1)(-1,0),(0,-1),(1,0)\}$.

Let $\II_4$ be the set of pairs of adjacent cells and $\DD_4$ be the set of diagonal pairs. The interaction set is $\II=\II_4\cup\DD_4$. The rule will act differently on diagonal pairs on the one hand, and horizontal and vertical pairs on the other hand.

For $(i,j)\in\II$, let $ \function{\tau_{(i,j)}}{\EE}{\EE}$ be the function that exchanges the states $x_i$ and $x_j$ in configuration $x$. Precisely, $\tau(x,{(i,j)})$ is defined with:
$$\tau(x,{(i,j)})_k = 
\begin{cases}
x_{j} & \TextIf k=i\\
x_{i} & \TextIf k=j\\
x_k & \TextIf k \notin \set{i,j}.\\
\end{cases}
$$

For a pair $(i,j)\in\II_4$, we define 
$$\Phi\und{C}(x,{(i,j)})=
\begin{cases}
\tau(x,{(i,j)}) \TextIf \mbox{ both cells } i \mbox{ and } j \mbox{ are not isolated,}\\
x \mbox{ otherwise.}\\
\end{cases}
$$

For a pair $(i,j)\in\DD_4$, we define $\PhiC(x,{(i,j)})=\tau(x,{(i,j)}).$

Note that if $x_i=x_j$, both cases above result in leaving $x$ unchanged.

In practice, numerical simulations show that in order to improve the speed of convergence to the archipelago, it is more appropriate to do the exchanges with different probability rates, depending on the state of the neighbourhood of the pair. However, these parameter do not affect the reachability properties, this is why we will here work with the simplest version of the model.

\figProof
\ProofSpacingLoose


\subsection{Glauber dynamics}\label{sec:glauber}

We will now present a family of stochastic IPS having the property to solve the spacing problem ``with an arbitrary precision'', by converging to a distribution on configurations (with same density as the initial configuration) for which the ``energy'' can be controlled. Let us thus precise this notion of energy.

The energy $E(x)$ of a configuration $x$ is the number of (horizontal or vertical) pairs $\qX\qX$ plus the number of (horizontal or vertical) pairs $\qO\qO$ in the configuration. Precisely, for $x\in\EE$,
$E(x)=\sum_{(i,j)\in\II_4} \ind[x_i=x_j],$
where we recall that $\II_4$ is the set of pairs of adjacent cells.

Let $\EE_k$  be the set of configurations of $\EE$ which contain $k$ cells in state $\qX$.

\begin{lemma}\label{lem:ens} If the grid is even-sized, then the configurations of $\EE_k$ of minimal energy are exactly the configurations of $\EE_k\cap \A=\EE_k\cap\A_q$, where $q$ is the minority state.
\end{lemma}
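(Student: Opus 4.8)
The plan is to reduce the statement to a single clean observation by isolating, within the energy, the contribution of the minority state. By the symmetry $\qO\leftrightarrow\qX$ I may assume without loss of generality that $\qX$ is the minority state, so that $k\le N_{\LL}/2$. For $x\in\EE$ I would set $E^{\qX}(x)=\sum_{(i,j)\in\II_4}\ind[x_i=x_j=\qX]$, the number of adjacent $\qX\qX$ pairs, so that $E(x)=E^{\qO}(x)+E^{\qX}(x)$. The first step is a double-counting (handshake) argument applied to the \emph{mixed} adjacent pairs, those with one cell $\qO$ and the other $\qX$: their number equals on one hand $\card\II_4-E(x)$, and on the other hand it can be read off the cells in state $\qX$, as $4k$ (the total degree of the $\qX$-cells) minus twice the number of internal $\qX\qX$ pairs. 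Equating the two yields a relation of the form $E(x)=\mathrm{const}(k,\LL)+c\,E^{\qX}(x)$ with $c>0$, where the additive constant depends only on $k$ and the grid, not on the particular configuration. Over $\EE_k$, where $k$ is fixed, this makes $E$ an increasing affine function of $E^{\qX}$, so minimizing $E$ over $\EE_k$ is equivalent to minimizing $E^{\qX}$ over $\EE_k$.

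Since $E^{\qX}(x)\ge 0$, with equality exactly when no two $\qX$'s are adjacent, i.e.\ exactly when $x\in\A_{\qX}$, the minimal-energy configurations of $\EE_k$ are precisely those attaining $E^{\qX}=0$, \emph{provided} such configurations exist in $\EE_k$. So the whole statement hinges on showing $\EE_k\cap\A_{\qX}\neq\emptyset$, that is, that one can place $k$ pairwise non-adjacent $\qX$'s on the grid.

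This existence claim is where even-sizedness enters, and I expect it to be the main obstacle. The plan is to use the parity $2$-coloring $(i,j)\mapsto i+j \bmod 2$ of $\LL$. When $n_1$ and $n_2$ are both even this coloring is consistent around the torus, hence a proper $2$-coloring: the grid graph is bipartite, each color class is an independent set, and the two classes have exactly $N_{\LL}/2$ cells each. Since $k\le N_{\LL}/2$, I can place the $k$ copies of $\qX$ on any $k$ cells of the even class, producing a configuration in $\EE_k\cap\A_{\qX}$. If either $n_i$ were odd the parity coloring would fail (odd cycles appear) and a maximum independent set could be strictly smaller than $N_{\LL}/2$, so this step genuinely uses the hypothesis. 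The same bipartite structure, together with a perfect matching of the even torus, also shows the independence number is exactly $N_{\LL}/2$.

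It remains to check the set equalities. The previous steps identify the minimal-energy configurations of $\EE_k$ with $\{E^{\qX}=0\}=\EE_k\cap\A_{\qX}=\EE_k\cap\A_q$. To upgrade $\A_{\qX}$ to $\A$, I would rule out genuine $\qO$-archipelagos in $\EE_k$ when $\qX$ is the strict minority: in a $\qO$-archipelago the $\qO$'s form an independent set, so $\card\{\qO\}\le N_{\LL}/2$, which is incompatible with $\card\{\qO\}=N_{\LL}-k>N_{\LL}/2$; hence $\EE_k\cap\A_{\qO}=\emptyset$ and $\EE_k\cap\A=\EE_k\cap\A_{\qX}$. In the tie case $k=N_{\LL}/2$ either state may serve as the minority, which matches the parenthetical in the statement. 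Combining the reduction, the nonnegativity of $E^{\qX}$, and the existence guaranteed by even-sizedness gives that the configurations of $\EE_k$ of minimal energy are exactly $\EE_k\cap\A=\EE_k\cap\A_q$.
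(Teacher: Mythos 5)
Your proof is correct and follows essentially the same route as the paper's: the same double-counting identity expressing the number of mixed adjacent pairs in two ways, the resulting reduction of minimizing $E$ over $\EE_k$ to minimizing $E^{\qX}$, and the use of even-sizedness to guarantee $\EE_k\cap\A_q\neq\emptyset$. You merely spell out two steps the paper leaves implicit (the bipartite/parity construction showing the independence number is $N_{\LL}/2$, and the emptiness of $\EE_k\cap\A_{\qO}$ when $\qX$ is the strict minority), which is fine.
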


\ProofLemmaEns

For $(i,j)\in\II_4$, let us also define the local energy $E_{(i,j)}(x)$ of configuration $x$ at edge $(i,j)$ by:
$$E_{(i,j)}(x)=\sum_{(k,\ell)\in\V_4(i,j)}\ind[x_k=x_{\ell}],$$
where $\V_4(i,j)$ denotes the six edges of $\II_4\setminus\{(i,j)\}$ sharing a vertex with $(i,j)$.

Let $\beta\in\mathbb R$ be some fixed parameter. We propose the stochastic IPS dynamics defined as follows.
\begin{enumerate}
\item Choose uniformly at random a pair $u=(i,j)\in \II_4$ of horizontal or vertical consecutive cells.
\item Then, exchange the states of cells $i$ and $j$ with probability $p(x,u)$ defined by $$p(x,u)={\exp(\beta E_{u}(x)) \over \exp(\beta E_{u}(x))+\exp(\beta (6-E_{u}(x)))}={1\over 1+\exp(\beta (6-2E_{u}(x)))}.$$
\end{enumerate}

The number of cells in state $\qX$ is conserved by this dynamics, so that for any $k\in\{0,N_{\LL}\}$, it defines a discrete time Markov chain on $\EE_k$. The sequence of edges that will be chosen at each time step is given by a sequence of i.i.d. random variables $ (u_t)_{t\in\N} \in \II_4^\N $, where $ u_t $ is uniformly distributed on $\II_4$. Starting from an initial condition $x \in \EE_k$, the system evolves according to the sequence of states $(x^t)_{t\in \N}$ defined by $ x ^ 0 = x $ and 
$$x^{t+1}=\begin{cases}
\tau(x^t,u_t) & \mbox{ with probability } p(x,u_t),\\
x^t & \mbox{ with probability } 1-p(x,u_t).
\end{cases}
$$

This Markov chain is clearly irreducible and aperiodic. We denote its transition kernel by $P$. In particular, if $x\not=y$ and $y=\tau(x,u)$ for some $u\in\II_4$, then we have:
$P(x,y)={1\over\card \II_4} \, p(x,u).$

\begin{proposition} The Markov chain defined above is reversible, and its stationary distribution on $\EE_k$ is given by 
$ \mu_{\beta}(x)={1\over Z_{\beta}} \exp(-\beta E(x))$ for all $ x \in \EE_k$,
where $Z_{\beta}=\sum_{x\in\EE_k}\exp(-\beta E(x)).$

\end{proposition}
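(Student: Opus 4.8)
The plan is to establish \emph{detailed balance} with respect to $\mu_\beta$, namely that $\mu_\beta(x)\,P(x,y)=\mu_\beta(y)\,P(y,x)$ for all $x,y\in\EE_k$. Once this holds, summing the equation over $y$ shows that $\mu_\beta$ is stationary, while the equation itself is precisely the statement that $P$ is reversible with respect to $\mu_\beta$; uniqueness of the stationary distribution then follows from the irreducibility and aperiodicity already noted. Both sides of the detailed balance equation vanish unless $x$ and $y$ differ by a single admissible swap, so it suffices to treat the case $y=\tau(x,u)$ for a pair $u=(i,j)\in\II_4$ with $x_i\neq x_j$ (if $x_i=x_j$ then $y=x$ and there is nothing to check). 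Note also that the swap is involutive, $x=\tau(y,u)$, and that $u$ is the unique edge whose exchange carries $y$ back to $x$, so $P(y,x)=p(y,u)/\card\II_4$ with the same edge $u$.

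The heart of the argument is to relate the global energy difference $E(y)-E(x)$ to the local quantity $E_u(x)$ that governs the acceptance probability. Swapping the two endpoints of $u$ modifies only edges incident to $i$ or $j$. The edge $u$ itself contributes $\ind[x_i=x_j]=0$ before and after the exchange, so its contribution is unchanged; the remaining affected edges are exactly the six edges of $\V_4(i,j)$, each joining $i$ or $j$ to an outside cell. Because the alphabet is binary and $x_i\neq x_j$, exchanging $x_i$ and $x_j$ turns every agreement along these six edges into a disagreement and vice versa, whence $E_u(y)=6-E_u(x)$ and therefore $E(y)-E(x)=E_u(y)-E_u(x)=6-2E_u(x)$.

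With this identity the verification reduces to a short computation. Setting $a=\beta(6-2E_u(x))$, the acceptance probabilities are $p(x,u)=1/(1+e^{a})$ and, using $E_u(y)=6-E_u(x)$ so that $\beta(6-2E_u(y))=-a$, $p(y,u)=1/(1+e^{-a})$; hence $p(y,u)/p(x,u)=(1+e^{a})/(1+e^{-a})=e^{a}$. Since $P(x,y)$ and $P(y,x)$ share the prefactor $1/\card\II_4$, the ratio of transition probabilities equals $e^{a}=\exp(\beta(6-2E_u(x)))$. On the other hand, $\mu_\beta(x)/\mu_\beta(y)=\exp(\beta(E(y)-E(x)))=\exp(\beta(6-2E_u(x)))$ by the energy identity, and the two ratios coincide, which is exactly detailed balance. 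I expect the only delicate point to be the energy bookkeeping of the previous paragraph, in particular the observation that the contribution of $u$ itself is invariant while precisely the six edges of $\V_4(i,j)$ flip; the remainder is routine algebra.
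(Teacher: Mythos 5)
Your proof is correct and follows essentially the same route as the paper: verifying detailed balance via the identities $E_u(y)=6-E_u(x)$ and $E(y)-E_u(y)=E(x)-E_u(x)$, which you simply work out in more explicit algebraic detail. No gaps.
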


\begin{proof} Let us check that the detailed balance $\mu_{\beta}(x)P(x,y)=\mu_{\beta}(y)P(y,x)$ holds for any two configurations $x$ and $y$.
It is enough to prove that if $x\not=y$ and $y=\tau(x,u)$ for some $u\in\II_4$, then $\exp(-\beta E(x))p(x,u)=\exp(-\beta E(y))p(y,u)$.
But in that case, $E_u(y)=6-E_u(x)$ and $E(y)-E_u(y)=E(x)-E_u(x)$, so that the equality is satisfied.
\QED\end{proof}

\begin{proposition} If the grid is even-sized, then when $\beta\rightarrow\infty$, the distributions $\mu_{\beta}$ converge to the uniform measure on configurations of minimal energy, that is to the uniform measure on $\EE_k\cap \A$.
\end{proposition}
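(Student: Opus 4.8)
The plan is to recognize this as the standard low-temperature (zero-temperature) limit of a Gibbs measure: as $\beta\to\infty$, the weight $\exp(-\beta E(x))$ concentrates on the configurations minimizing $E$ within $\EE_k$, and Lemma~\ref{lem:ens} identifies these minimizers with $\EE_k\cap\A$. Since $\EE_k$ is a finite set, convergence of measures is the same as pointwise convergence of the masses $\mu_\beta(x)$, so it suffices to compute, for each fixed $x\in\EE_k$, the limit of $\mu_\beta(x)$ as $\beta\to\infty$. Concretely, I would first set $m=\min_{x\in\EE_k}E(x)$ and $M=\{x\in\EE_k : E(x)=m\}$; by Lemma~\ref{lem:ens} we have $M=\EE_k\cap\A$, and in particular $M\neq\varnothing$ on an even-sized grid, so $\card M\geq 1$.

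The main computation is to factor $\exp(-\beta m)$ out of both the numerator and the normalizing constant $Z_\beta$, which gives, for every $x\in\EE_k$,
$$\mu_\beta(x)=\frac{\exp(-\beta(E(x)-m))}{\sum_{y\in\EE_k}\exp(-\beta(E(y)-m))}.$$
The key point is that $E$ is integer-valued, so $E(y)-m=0$ for $y\in M$ and $E(y)-m\geq 1$ for $y\notin M$. Splitting the denominator according to membership in $M$ yields
$$\sum_{y\in\EE_k}\exp(-\beta(E(y)-m))=\card M+\sum_{y\notin M}\exp(-\beta(E(y)-m)),$$
and each term of the second sum tends to $0$ as $\beta\to\infty$, so the whole denominator converges to $\card M$.

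For the numerator I would then distinguish two cases. If $x\in M$, the numerator equals $1$ for all $\beta$, hence $\mu_\beta(x)\to 1/\card M$; if $x\notin M$, the numerator is $\exp(-\beta(E(x)-m))$, which tends to $0$, hence $\mu_\beta(x)\to 0$. This is precisely the statement that $\mu_\beta$ converges pointwise to the uniform probability measure on $M=\EE_k\cap\A$, and on the finite space $\EE_k$ this pointwise convergence of masses is exactly convergence of the measures (e.g.\ in total variation).

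I do not expect a genuine obstacle here: this is the textbook Gibbs concentration argument, and the only step requiring real input is the correct identification of the energy minimizers together with their non-emptiness, which is supplied directly by Lemma~\ref{lem:ens}. The integrality of $E$ is a convenience that provides a uniform gap $E(y)-m\geq 1$, but it is not essential; all that is truly needed is $E(y)>m$ for every $y\notin M$, which already forces exponential decay of the off-minimum terms and hence the same conclusion.
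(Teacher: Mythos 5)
Your proof is correct and follows the same route as the paper: the paper's own proof is a one-line appeal to the definition of $\mu_{\beta}$ together with Lemma~\ref{lem:ens}, and your argument is simply the standard Gibbs concentration computation that this appeal implicitly invokes, written out in full. No discrepancies to report.
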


\begin{proof} It follows from the definition of $\mu_{\beta}$, and Lemma~\ref{lem:ens} above. \QED\end{proof}

To sum up, the Glauber dynamics gives a simple way to approach our goal of spacing out particles. Compared to our checkerboard synchronisation rule $ \PhiC$, it has the advantage of being simple and to use only horizontal and vertical interactions between cells. 
The distribution at the equilibrium can be determined analytically: it has same weight on all archipelagos with same number of particles, and the weight of non-archipelagos decreases exponentially as a function of $\beta$. But, as for $ \Phi_C$, there is still the need to know what are the time scales for observing the convergence to the equilibrium: a rule that would converge with a speed that is exponentially slow with the grid size would be useless in practice.

\section{The density classification problem on finite lattices}
\label{sec:densClassifRule}

As a first step, we propose to study here only how to use a modified version of $ \PhiC $ to solve the density  classification problem, leaving the Glauber dynamics for future work.
We build our solution to the density classification problem by combining $ \PhiC$ with the majority rule. 

For the checkerboard synchronisation dynamics, we introduce the following parametric variant of $\PhiC$, in order to increase the speed of convergence.
For a configuration $x\in\EE$ and a pair $(i,j)\in \II_4$, the new rule $ \PhiT $ is defined as follows:
(a) if both $i$ and $j$ have exactly one adjacent cell in the same state, the exchange (which is then always allowed in $\PhiC$) is now applied with a probability $\lambda$ ; (b) in all other cases, we apply $\PhiC$ ; (c) for a pair $(i,j)\in \DD_4$, the exchange (which is always allowed in $\PhiC$), is now done only with a probability $\chi$. For $\lambda=\chi=1$, we recover $\PhiC$.


We now combine $\PhiT$ with a majority rule, to obtain a rule $\PhiD$ defined as follows: 
for a configuration $x\in\EE$ and a pair $(i,j)=u\in\II$, 
$$\PhiD(x,u)=
\begin{cases}
\PhiT(x,u) \mbox{ with probability } 1-\epsilon,\\
{\mbox{Maj}}(x,i) \mbox{ with probability } \epsilon/2,\\
{\mbox{Maj}}(x,j) \mbox{ with probability } \epsilon/2,
\end{cases}
$$
where $\function{\mbox{Maj}}{\EE\times\LL}{\EE}$ is the function such that $y=\mbox{Maj}(x,i)$ is defined by $y_c=x_c$ if $c\neq i$ and $y_i$ is the majority state in the Moore neighbourhood of $c$ (the 8 nearest neighbours of $ c$). We can now state our main proposition.


\begin{proposition}
For an even-sized grid $ \LL = (\Z/a\Z)\times(\Z/b\Z)$ with $a,b \in 2\Z$, 
for any configuration $ x \in \LL $, and any non-zero value of $ \lambda $ and $ \chi $, 
the probability that $ \PhiD $ provides a good classification of $ x $ tends to 1 as $ \epsilon $ tends to 0.
\end{proposition}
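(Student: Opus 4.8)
The plan is to exploit a separation of time scales. When $\epsilon$ is small the process behaves for a long time like the pure spacing dynamics $\PhiT$, which by Proposition~\ref{prop:spacing:loose:solution} (its reachability being unaffected by taking $\lambda,\chi>0$) drives any configuration into the sink $C$ of sub-checkerboards. Since $\PhiT$ is conservative, the density is unchanged along this first excursion, so if $d_{\qX}(x)>1/2$ the first sub-checkerboard reached must lie in $\Ce{\qO}\cup\Co{\qO}$ — the pieces of $C$ in which $\qX$ is the majority — because $\Ce{\qX}$ and $\Co{\qX}$ both force $d_{\qX}\leq 1/2$. I will then show that this region is a trap whose only exit is the correct homogeneous configuration $\qX^{\LL}$, and finally that the process enters it before any majority update with probability tending to $1$ as $\epsilon\to 0$.

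First I would establish the key one-sided monotonicity. Fix a configuration in $\Ce{\qO}$, so every $\qO$ sits on an (isolated) even cell while every odd cell carries $\qX$. I claim neither $\PhiT$ nor a majority update can place a $\qO$ on an odd cell or turn a $\qX$ into a $\qO$. For $\PhiT$ this is immediate: horizontal/vertical pairs touch an isolated $\qO$ and are frozen, while diagonal pairs only permute states within a single sublattice, preserving membership in $\Ce{\qO}$. For a majority update I would check that every $\qX$ cell has at least four concordant cells among its eight Moore neighbours (the four orthogonal neighbours of an even $\qX$, respectively the four diagonal neighbours of an odd $\qX$, are all $\qX$), so a strict $\qO$-majority is impossible and the cell cannot flip; by contrast an even $\qO$ cell flips to $\qX$ as soon as one of its diagonal neighbours is $\qX$. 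Hence $\Ce{\qO}$, and symmetrically $\Co{\qO}$, is invariant under the full rule $\PhiD$, and the number $k$ of $\qX$-cells is non-decreasing. Using $\chi>0$ I can, by diagonal $\PhiT$-moves, bring any remaining $\qO$ next to a $\qX$ on the even sublattice (such a $\qX$ exists whenever $k>N_{\LL}/2$) and then flip it by a majority step; thus from every configuration of $\Ce{\qO}\cup\Co{\qO}$ other than $\qX^{\LL}$ there is a positive-probability path strictly increasing $k$. As the state space is finite and $k$ cannot decrease, the process is absorbed at $\qX^{\LL}$ almost surely.

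Second I would make the time-scale argument quantitative. Since $C$ is absorbing and accessible from every state under $\PhiT$, the pure-$\PhiT$ chain reaches $C$ almost surely in finite time; by finiteness of $\EE$, for every $\eta>0$ there is a horizon $t_0=t_0(\eta)$ such that, uniformly in the starting configuration, pure $\PhiT$ reaches $C$ within $t_0$ steps with probability at least $1-\eta$. Coupling the two dynamics through the same pair sequence $(u_t)$ and an independent family of Bernoulli$(\epsilon)$ variables deciding between spacing and majority, the event that no majority update occurs before time $t_0$ has probability $(1-\epsilon)^{t_0}$, and on it the process coincides with pure $\PhiT$. Writing $G$ for the event that the process enters $C$ via pure $\PhiT$ within $t_0$ steps, independence gives $\Pr_{\epsilon}[G]\geq (1-\epsilon)^{t_0}(1-\eta)$; on $G$ the density is still $>1/2$, so the configuration lands in $\Ce{\qO}\cup\Co{\qO}$, whence $\qX^{\LL}$ is reached almost surely by the previous paragraph. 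Therefore $\Pr_{\epsilon}[\text{correct}]\geq (1-\epsilon)^{t_0}(1-\eta)$, so $\liminf_{\epsilon\to 0}\Pr_{\epsilon}[\text{correct}]\geq 1-\eta$; letting $\eta\to 0$ yields the claim, and the case $d_{\qO}(x)>1/2$ is symmetric.

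The hard part is the second paragraph: one must be certain that entering the correct family of sub-checkerboards genuinely commits the process to the correct answer, i.e.\ that rare majority updates can never rebuild a $\qO$ on the wrong sublattice nor destroy a $\qX$. This is exactly the asymmetry arising from the fact that in $\Ce{\qO}$ every $\qX$ already has four concordant Moore neighbours; it is this one-sided behaviour of the majority rule on sub-checkerboards — rather than any delicate estimate on the perturbed stationary distribution — that carries the proof, and it is what must be verified with care, including the treatment of the $4$--$4$ ties (which must leave a cell unchanged, so that no $\qX$ is ever flipped).
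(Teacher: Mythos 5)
Your proof is correct and follows essentially the same two-step strategy as the paper's (sketched) proof: stability of the correct sub-checkerboards $\Ce{q}\cup\Co{q}$ under both $\PhiT$ and the majority rule, with the minority count only able to decrease, followed by the $\epsilon\to 0$ time-scale separation showing the process reaches those sub-checkerboards before any majority update with high probability. You supply details the paper leaves implicit --- the Moore-neighbourhood counting showing every majority-state cell has at least four concordant neighbours, the uniform horizon $t_0(\eta)$ and coupling, and the caveat about $4$--$4$ ties in the majority rule --- but the route is the same.
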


\begin{proof}[Sketch]
The proof is the same as the one given for a one-dimensional system~\cite{FatTocs13}. 
For the sake of simplicity we can set $ \lambda = 1$ and $ \chi = 1$, that is, make $ \PhiT $ and $ \PhiC $ equal.
First, remark that despite the stochastic nature of our systems, archipelagos are well classified with probability one. 
In particular, if $ q $ is the minority state of $ x $, 
then the sets of sub-checkerboards $ \Ce{q} $ and $ \Co{q} $  are stable by the application of both $ \PhiT $ and the majority rule. 
Moreover, the majority rule either leaves $ x $ unchanged or diminishes by 1 the number of $ q $'s in $ x $.
In other words, the system can only converge to the right fixed point $ \qO^\LL $ or $ \qX^\LL$.

The second property to remark is that as $ \epsilon $ gets smaller, the probability of {\em not} applying the majority rule during the first $ k $ time steps tends to 1 for every value of $ k $. 
In other words, for a configuration $ x$ with minority symbol $ q $, we can make the probability to reach a sub-checkerboard in $ \Ce{q} \cup \Co{q} $ before applying the majority rule as high as needed.
\QED\end{proof}

In order to evaluate the quality of the rule in practice, let us now briefly explore how the two rules $ \PhiT $ and $ \PhiD$ behave with respect to their various settings.

\figHomoPhase

For the particle spacing problem, our rule is defined with three parameters:  the grid width $ L $ and the two probabilities of exchange $ \lambda $ and $ \chi $.

Let us first examine how to set $ \lambda $ and $ \chi $. For each setting of the system, we repeated $ 1000 $ experiments consisting of initializing the system with an independent Bernoulli of parameter $1/2$ for each cell and measuring the time needed to attain an archipelago configuration. Note that for the sake of making fair comparisons, we show here the {\em rescaled time}, that is, a time step is taken as $ L^2 $ random updates of the global rule.
We also take $ L $ to be even as for odd-sized grids there are initial conditions for which the particle spacing problem has no solution. In the previous studies, typical values of $ L $ were taken around $ 20 $, see e.g. Ref.~\cite{Oli14,FatDCP12} and references therein. 

Figure~\ref{fig:homo}-left shows how the average time to convergence varies as a function of $ \lambda$ for the specific value $ \chi=0.1$. 
It can be observed that for $ \lambda$ smaller than $ 0.4 $ the time is relatively small, while for higher values of $ \lambda $, the time increases drastically. 

To examine the effect of $ \chi$, we arbitrarily fixed the value of $ \lambda $ to $ 0.25$ and measured the average convergence time to an archipelago. 
Figure~\ref{fig:homo}-right shows how the average time to convergence varies as a function of $ \chi$.
Here again, it can be observed that for $ \chi$ smaller than $ 0.3 $ the time is relatively small, while for higher values of $ \chi $, the time increases drastically. 

Interestingly, these two experiments show that in order to avoid the existence of non-archipelago fixed-point configurations, the two parameters have to be set strictly greater than zero, but can not be set too high. It is an open question to determine if there exists a phase transition with respect to the convergence to an archipelago. This would imply that, for infinite systems, if $\lambda$ and $ \chi $ are set above a given threshold, with a high probability, the system does not converge to an archipelago.

Our third experiment is to observe the density classification itself.
For each random sample, we took random initial conditions with a uniform probability to be $ \qO $ or $ \qX $ for each cell independently. For even-sized grids, in case of equality between the number of $ \qO$s and $ \qX$s, we dismissed this initial condition and re-sampled another one with the same random distribution. 
We define the {\em quality} as the ratio of successful classifications, that is, the convergence to right fixed point $ \qO^\LL $ or $\qX^\LL$ depending on whether the initial condition has a density smaller than or greater than 1/2.  

\figK

Fig.~\ref{fig:koupa} shows the evolution of the quality as a function of the grid width~$ L $, for the particular setting $ \lambda =0.25$, $ \chi = 0.1$, $ \epsilon = 0.001$. (Surprisingly, we empirically remarked that the quality for $ \chi = 0.2$ is slightly lower.)
These results show that this rule has a quality that is comparable to the best two-dimensional classification rules known so far~\cite{Oli14,FatDCP12}.
Without surprise, for even-sized grids, the quality decreases as $ L $ gets larger, as it becomes more difficult to discriminate between the configurations that have approximately the same number of $ \qO$s and $\qX$s.
The curve for odd-sized grids is more surprising. Indeed, it shows an {\em increase} of the quality with $ L $, at least in the range of sizes that were examined.
We believe that this phenomenon results from the impossibility to space out the particles for some configurations of odd-sized grid. The system is in some sense ``forced'' to converge to a non-perfect configuration, in which the majority rule may introduce errors and make the system shift towards the wrong fixed point. However, as the size further increases this effect is less important and it is probable that for a given $\epsilon$, the difference between even-sized and odd-sized grids disappears.

\section{Some questions}

The density classification problem and the particle spacing problem can both be extended to infinite lattices. 
The set of configurations is then $\EE=\QQ^{\Z^d}$. 
For the density classification problem, 
a possible extension to infinite lattices consists in designing a cellular automaton on $\EE$ such that if the initial configuration is drawn independently for each cell according to a Bernoulli law of parameter $ p $, then if $p<1/2$, the density of $\qX$s converges to $0$, while if $p>1/2$, the density of $\qO$s converges to $0$.

This problem has already been studied by Marcovici and her collaborators~\cite{BFMM13}. In particular, it was shown that there is a simple example of deterministic cellular automaton that classifies the density on $\Z^2$: Toom's rule, which is the majority rule on the neighbourhood $\Neighb=\{(0,0),(0,1),(1,0)\}$. However, in dimension $1$, it is an open problem whether there exists a (deterministic or probabilistic) rule that classifies the density. Taati has partially answered this question by giving an argument that holds for densities close to zero or to one~\cite{Taa15}.


Similarly, for the particle spacing problem, an extension to infinite lattices consists in asking to design a conservative 
cellular automaton on $\EE$ such that if the initial configuration is drawn independently for each cell according to a Bernoulli law of parameter $ p $, then if $p<1/2$, the density of non-isolated $\qX$s converges to $0$, while if $p>1/2$, the density of non-isolated $\qO$s converges to $0$.
It is known that the traffic cellular automaton $F_{184}$ is a solution to that problem on $\Z$, but the problem remains open in dimension $d\geq 2$~\cite{BF2005}.

The IPS models we have introduced in Section~\ref{sec:IPS} are also interesting when studying them on $\Z^2$ instead of finite lattices. In that case, to define properly the model, we need to consider continuous-time updates: each interacting pair of $\II$ possess a clock that rings at times that are exponentially distributed (independently for the different pairs), and the local rule is applied when the clock rings. It is an open problem to know if there is a proper setting of the checkerboard synchronisation dynamics having the property to space particles on $\Z^2$. 
Another interesting model is the Glauber dynamics for $\beta=\infty$. In that case, we allow exchanges between cells only if it makes the energy decrease (the exchange is made with probability $1/2$ if the exchange does not change the value of the energy). On finite grids, this IPS has many fixed points that are not archipelagos, but starting from a configuration on $\Z^2$ drawn according to a Bernoulli measure, the behaviour could be different.


As far as the performance of the models is concerned, we can ask what are the best settings to obtain a good trade-off between the quality of classification and the time needed to converge to a fixed point.


We also ask if we can transform our IPS into probabilistic cellular automata for solving the two-dimensional density classification problem. This can be done by using more states or by sharing the randomness of the cells (see Ref.~\cite{MaiMar14} and references therein), but it is an open problem whether there is a solution within the usual framework of binary probabilistic cellular automata.

\section*{Acknowledgments}
The work of Siamak Taati is supported by ERC Advanced Grant 267356-VARIS of
Frank den Hollander.

\bibliographystyle{plain}
\bibliography{bib-CA,bib-Fates}

\end{document}